\def\nn{\nonumber}
\def\Z{\mathbb{Z}}
\def\ua{\alpha}
\def\da{\dot{\alpha}}
\def\ub{\beta}
\def\db{\dot{\beta}}
\def\g{\mathfrak{g}}
\def\h{\mathfrak{h}}
\def\HLS{Haag-\L opusza\'nski-Sohnius }
\def\CM{Coleman-Mandula }
\def\ev{\text{ev.}}
\def\od{\text{od.}}
\theoremstyle{plain}
\newtheorem{thm}{Theorem}
\newtheorem*{thm*}{Theorem}
\theoremstyle{definition}
\newtheorem{dfn}{Definition}
\begin{document}

\title{Novel possible  symmetries of $S$-matrix generated by $\Z_2^n$-graded Lie superalgebras}

\author{Ren Ito and Akio Nago
	\\[15pt]
	 Department of Physics, Osaka Metropolitan University,
\\Sugimoto Campus, Osaka 558-8585, Japan}

\maketitle

\vfill
\begin{abstract}
In this paper, we explore the $\Z_2^n$-graded Lie (super)algebras as novel possible generators of symmetries of $S$-matrix. 
As the results, 
we demonstrate that 
a $\Z_2^n$-graded extension of the supersymmetric algebra can be a symmetry of $S$-matrix. Furthermore, it turns out that a $\Z_2^n$-graded Lie algebra appears as internal symmetries.
They are natural extensions of \CM theorem and \HLS theorem, which are the no-go theorems for generators of symmetries of $S$-matrix. 
\end{abstract}



\section{Introduction}
Since the introduction of $SU(4)$ symmetry in the classification of nuclei by Wigner in 1937, the properties and fundamental structure of particles have been described using group theory\cite{F.J.D, S.W}.
The quark model introduced in the Standard Model is a model using $SU(3)$ flavor symmetry proposed by Gell-Mann in 1964\cite{M.G}. 
It successfully describes the relationship among hadrons with the same spin.
Furthermore, if we consider the $SU(6)$ symmetry for quark flavor and spin, we can embed  hadrons with different spins in the same representation. 

However, such $SU(6)$ symmetry that unify and describe different spins are approximate symmetries that exist in the non-relativistic limit, and despite some success, all attempts to generalize them to relativistic quantum theory have failed\cite{S.C}. 
Indeed, it was shown by Coleman and Mandula in 1967 that such symmetries are forbidden in relativistic quantum field theory (QFT), known as the Coleman-Mandula theorem\cite{CM}.
The theorem says that all possible symmetries of $S$-matrix generated by Lie algebra are only Poincar\'e symmetry and internal symmetry such that its generators commute with Poincar\'e algebra.
This prevents us to unify Poincar\'e symmetry and internal symmetries into a single multiplet.
The reason for the strong restriction of this theorem is that the symmetries allowed in the S-matrix of QFT were searched only in the framework of Lie algebras. 
If we want to avoid the \CM theorem, it is natural to explore symmetries generated by  algebras beyond the Lie algebra.

In searching for symmetries that do not conflict with the Coleman-Mandula theorem, Haag, \L opsuza\'nski and Sohnius extended the algebra to Lie superalgebra, so that the symmetry generators are closed not only  for comutation relations but also for  anti-commutation relations\cite{HLS}.
This showed that supersymmetry is the only possible symmetry as an extension of Poincare symmetry, which is known as the Haag-\L opsuza\'nski-Sohnius theorem.
As its result, if we consider the Lie superalgebra as symmetry generators of the $S$-matrix, supersymmetry is allowed in QFT. 

In 1978,  further extensions of the Lie algebra were introduced by Rittenberg and Wyler \cite{RW1,RW2}(see also \cite{Ree,sch1}). This is called $\Z_2^n$-graded Lie (super)algebra, which is the extended algebra with the grading by abelian group $\Z_2^n:= \Z_2\times \Z_2 \times \cdots $(repeated $n$ times). It reduces to an ordinary Lie algebra when $n=0$,  and to a Lie superalgebra when $n=1$.
  
This type of algebra exhibits more complicated (anti-)commutation relations and possesses interesting properties in both physics and mathematics. In physics, symmetries generated by $\Z_2^n$-graded Lie superalgebras have been identified in various contexts, such as de Sitter SUGRA \cite{Vas}, nuclear quasi-spin \cite{jyw},  parastatistics \cite{tol},  non-relativistic  Dirac  equation \cite{AKTT1,AKTT2}, etc. 
Additionally, several $\Z_2^n$-graded extension of spacetime supersymmetries have  also been proposed \cite{zhe,LR,Toro1,Toro2,tol2,Bruce}. In particular, $\Z_2^n$-graded extension of super-Poincar\'e algebra introduced by Bruce \cite{Bruce} has some interesting properties and $\Z_2^n$-graded version of classical and quantum mechanics,  sigma model, etc. have been discussed based on the extension \cite{BruDup,AAD,AAd2,AKTcl,AKTqu,DoiAi1,DoiAi2,AiDoi,brusigma,bruSG,AIKT,Topp,Topp2,AiItoTa,AiItoTa2}. 

This indicates that $\Z_2^n$-graded Lie (super)algebras are not uncommon in physics. Instead, they provide new insights into the analysis of physical systems based on symmetries, which is one of the key motivations for studying their structure and representations. 
The structure and representation theories of $\Z_2^n$-graded Lie (super)algebras have been continuously explored since their introduction \cite{sch3,SchZha,Sil,ChSiVO,SigSil,PionSil,CART,MohSal,NAJS,NAPIJS,StoVDJ,Meyer,IsStvdJ,NAPSIJSinf,NAKA,KTclassification,Meyer2,NeliJoris,NeliJoris2,LuTan,FaFaJ,AiSe}.

However, despite the extensive studies carried out so far, the application of $\Z_2^n$-graded Lie (super)algebras to $(1+3)$-dimensional relativistic quantum field theory has not been considered at all. To this end, one of the most important issues is the $\Z_2^n$-graded Lie (super)algebra can be symmetries of $S$-matrix. 
If this is the case, it provides an extension of the \HLS theorem.
The purpose of this paper is to investigate the possible \( \mathbb{Z}_2^n \)-graded Lie (super)algebra symmetries of $S$-matrix in relativistic quantum field theory, following the approach of the \CM theorem and \HLS theorem. 
As the results, 
we demonstrate that 
a $\Z_2^n$-graded extension of the supersymmetric algebra can be a symmetry of $S$-matrix. Furthermore, it turns out that a $\Z_2^n$-graded Lie algebra appears as internal symmetries.

Our paper is organized as follows. In \S 2, we recall the definition of $\Z_2^2$-graded Lie superalgebra\cite{RW1,RW2}, and  briefly review \CM theorem and \HLS theorem\cite{CM,HLS} in the case of massive particles. 
In \S 3, we consider the novel generators of symmetries of $S$-matrix in the case of massive particles. We first analyze the scenario with $\Z_2^2$-grading, following the frameworks of \CM theorem and \HLS theorem. Subsequently, using the same method as for the $\Z_2^2$-grading, we will consider the possible generators with $\Z_2^n$-gradings. Finally we will discuss the relation between the generators of internal symmetries and $\Z_2^n$-graded Lie algebra.


\section{Preliminaries}
\subsection{$\Z_2^n$-graded Lie superalgebra}
We recall the definition of the $\Z_2^n$-graded LSA\cite{RW1,RW2}.
$\Z_2^n$ refers to the $n$-fold direct product of the abelian group $\mathbb{Z}_2=\{0,1\}$, namely $\Z_2^n=\mathbb{Z}_2\times \mathbb{Z}_2 \times \cdots$ (repeated $n$ times). 
\begin{dfn}
 Let $\g$ be a $\Z_2^n$-graded vector space  (over $\mathbb{R}$ or $ \mathbb{C}$), which is the direct sum of homogeneous vector subspaces labeled by an element of $\Z_2^n$:
\begin{align}
 \g = \bigoplus_{\vec{a}\in\Z_2^n}\g_{\vec{a}}.\label{Z2n-VS}
\end{align}
An element of $ \g_{\vec{a}}$ is said to have the $\Z_2^n$-grading $\vec{a}$. 
\end{dfn}
\begin{dfn}\label{defsalg}
We define the $\Z_2^n$-Lie bracket by
\begin{equation}
	\llbracket A, B \rrbracket = AB - (-1)^{\vec{a}\cdot\vec{b}} BA, \label{Z2n-LB}
\end{equation}
where $A \in \g_{\vec{a}}, \ B \in \g_{\vec{b}}$, and $ \vec{a}\cdot\vec{b} $ is the standard scalar product of $n$-dimensional vectors. 
A $\Z_2^n$-graded vector space is called a $\Z_2^2$-graded Lie superalgebra if 
$ \llbracket A,B \rrbracket \in \g_{\vec{a}+\vec{b}} $ and the Jacobi identity is satisfied:
\begin{align}
 \llbracket A, \llbracket B,C \rrbracket \rrbracket
= \llbracket \llbracket A,B \rrbracket, C \rrbracket
+ (-1)^{\vec{a}\cdot \vec{b}} \llbracket B, \llbracket A, C \rrbracket \rrbracket. \label{Z2n-JI}
\end{align}
\end{dfn}

The $\Z_2^n$-graded LSA can be regarded as an extended Lie algebra because the $\Z_2^n$-Lie bracket is a commutator (anti-commutator) for $ \vec{a}\cdot\vec{b} $ is even (odd). Actually when $n=0$, a $\Z_2^n$-graded LSA is equivalent to an ordinary Lie algebras. Furthermore, when $n=1$, it corresponds to the standard LSA, which consists of a bosonic sector ($0$-grading) and a fermionic sector ($1$-grading). It is known that for $n\ge 2$, a $\Z_2^n$-graded LSA also generates a symmetry. 
\subsection{Coleman-Mandula theorem}
In this subsection, we briefly review the Coleman-Mandula theorem, which is known as the no-go theorem for symmetries in QFT \cite{J.M, CM}.
The theorem states that
``The continious symmetries of the $S$-matrix in QFT can only be generated by a Lie algebra consisting of the translation generator $P_{\mu}$, the homogeneous Lorentz transformation generator $M_{\mu \nu}$, and the internal symmetry generator $B_l$''. 
Namely they discussed possible symmetries of the $S$-matrix generated by $\Z_2^n$-graded LSA's with $n=0$ and proved Poincar\'e symmetry and internal symmetry were only allowed in QFT. 
To consider the avoidance of this theorem, we look at the proof of this theorem\cite{J.M, S.W}.
They used the following resonable assumptions in their proof,
\begin{enumerate}
\item
The $S$-matrix is defined in relativistic local field theory, 
\item
The finite number of particle types with masses lighter than an arbitrary mass $M > 0$ is finite, 
\item
The amplitudes for elastic two-body scattering are analytic functions of the scattering angle at almost all energies and angles. 
\end{enumerate}
The symmetry generator here means a Hermitian operator on the Hilbert space which is commutative with the $S$-matrix and acts on multi particle states as well as single particle ones.
Note that we work in the momentum space so that any matrix element of the group generator is a distribution in momentum space.

First, we consider all generators which commute with $P_{\mu}$.
Any compact Lie algebra can be decomposed into a direct sum of a semi-simple part $\mathfrak{g}^\mathcal{S}$ and an Abelian part $\mathfrak{g}^\mathcal{A}$ (Levi-Decomposition). 
Since $P_\mu$ commute with itself, $P_{\mu}$ is in Abelian part: $P_{\mu} \in \mathfrak{g}^\mathcal{A}$. 
Thus, we consider the generators $B_{\alpha} \in \mathfrak{g}^\mathcal{S} $. 
We transform the commutation relation between $P_{\mu}$ and $B_\alpha$ by Lorentz transformation
\begin{align}
[U(\Lambda) B_\alpha U^{-1}(\Lambda),{\Lambda^\nu}_\mu P_\nu]=0
\end{align}
where $U(\Lambda)$ is the unitary operator on Hilbert space and represents the Lorentz transformation.
Since $ {\Lambda^\nu}_\mu $ is regular, $U(\Lambda) B_\alpha U^{-1}(\Lambda)$ must be a linear combination of $B_\alpha$:
\begin{align}
U(\Lambda) B_\alpha U^{-1}(\Lambda) = {D(\Lambda)^\beta}_\alpha B_\beta,
\end{align}
where $D(\Lambda)$ is the representation matrix of Lorentz group.
Since the metric of semi-simple part $\mathfrak{g}^\mathcal{S}$ is positive definite, $D(\Lambda)$ must be an identity matrix for $B_\alpha$ to satisfy the same algebraic relation after Lorentz transformation.
Noting that the Lorentz group is not compact, the symmetry generated by $B_\alpha \in \mathfrak{g}^\mathcal{S} $ is independent of Lorentz symmetry.
The $B_\alpha$ is independent of the momentum and acts as an identity matrix with respect to the spin, which is the generator of internal symmetry $B_l$.

Next, we consider all symmetry generators which do not commute with $P_\mu$.
All symmetry generators $G_{\ua}$ act on single particle states as follows
\begin{align} \label{alloperator}
G_\alpha \ket{n ,p}= \sum_{n^\prime} \int \dd^4 p^{\prime} \left( K_\alpha(p^{\prime}, p) \right)_{n^{\prime}, n}\ket{n^\prime ,p^\prime} ,
\end{align}
where $n^{\prime}, n$ denote the spin and particle type.
The kernel $K_\alpha (p^{\prime}, p)$ will be zero if both $p$ and $p^{\prime} $ are not on the mass hyperboloid. 
Furthermore, the kernel $K(p^{\prime}, p)$ is zero for all $p \neq p^\prime$:
\begin{align}
K_\alpha(p^{\prime}, p) = K_\alpha(p) \delta^4(p-p^{\prime}) .
\end{align}
From the second assumption, since the mass of the particle takes only discrete values, $G_\alpha$ commute with the mass square operator.
Thus, the commutation relation of $P_\mu$ and $G_\alpha$ satisfies
\begin{align}
[P^\mu,G_\alpha]=a_\alpha^{\mu\nu}P_\nu,
\end{align}
where $a_\alpha^{\mu \nu}$ is anti-symmetric with respect to $\mu,\nu$.
It follows that $G_\alpha$, which does not commute with $P_\mu$, is the Lorentz generator 
\begin{align}\label{Poincare}
[P_\mu, M_{\nu\rho}]=i(g_{\nu\mu}P_\rho - g_{\rho \mu}P_\nu).
\end{align}
Therefore, the symmetries of the $S$-matrix in QFT are only the Poincar\'e and internal symmetries.


\subsection{Haag-\L opusza\'nski-Sohnius theorem}
In this subsection, we briefly review the Haag-\L opusza\'nski-Sohnius theorem\cite{HLS}, which extends the Coleman-Mandula theorem.
In \eqref{alloperator}, $G_\alpha$ can be separated into two different types; bosonic type for same statistics $n^{\prime}, n$, or fermionic type for different statistics $n^{\prime}, n$.
All bosonic type generators are fully analyzed in the Coleman-Mandula theorem.
A complete basis of bosonic type generators are given by the $P_{\mu}$,$M_{\mu\nu}$ and  $B^l$.
On the other hand, the fermionic type generator is not mentioned in the Coleman-Mandula theorem because of the assumption that the all generators consist of Lie algebra.
They extended the Lie algebra to include anti-commutation relations in order to consider the fermionic type generators.
This theorem says that,
``Poincar\'e symmetry can only be extended to supersymmetry''

Now, let us see how this supersymmetry avoids the Coleman-Mandula theorem.
Note that $G_\alpha$ in \eqref{Poincare} can be bosonic or fermionic. 
However, the proof in the previous section concludes that the possible $G_\alpha$ is only \eqref{Poincare}.
Therefore, it is enough to consider fermionic generator $Q$ commute with $P_\mu$. 
Let $\mathfrak{c}$ denote the set of all generators commute with $P_\mu$.
Since $\mathfrak{c}$ is stable under the Lorentz transformations, it is a finite dimensional representation space of Lorentz groups, and all generators in $\mathfrak{c}$ can be decomposed into irreducible representation.
The irreducible representation of the Lorentz group is specified by two spinors, which we write as $(a,b)$-rep .
In general, because fermionic generators can have an odd number of spin indices, we denote the generator of  $(j,j^\prime)$-rep  and its conjugate $(j^\prime,j)$-rep  characterizing two independent spins $\ua,\db$ as $Q_{\ua_1\dots\ua_{2j}\db_1\dots\db_{2j'}}, \ \bar{Q}_{\ub_1\dots\ub_{2j^\prime}\da_1\dots\da_{2j}}$,
where $j,\ j^\prime$ is an integer or half-integer, and $2j$ undotted and $2j^\prime$ dotted indices are symmetric in each part.
Now consider the anti-commutation relation between $Q$ and $\bar{Q}$.
Since it is producted from two fermionic generators, it must be bosonic and belong to $(j+j^\prime, j+j^\prime)$-rep.
The bosonic generators are defined by the Coleman-Mandula theorem and are either $P_\mu,M_{\mu\nu},B_l$. The irreducible representation of the Lorenz group of each generator is as follows
\begin{align}
P_\mu \rightarrow \left(\frac12,\frac12 \right)\text{-rep},\quad M_{\mu\nu} \rightarrow \left(1,0\right)\text{-rep} +\left(0,1\right)\text{-rep},\quad B_l \rightarrow \left(0,0\right)\text{-rep}. \nn
\end{align}
It follows that $j+j^\prime \leq \frac{1}{2}$, which means $Q \in \mathfrak{c}$ is only rank-1 spinor charge. 
Therefore, the anti-commutator of $Q$ with $\bar{Q}$ becomes $P_{\mu}$
\footnote{
Our notations are as follows. Let $g_{\mu\nu}=\mathrm{diag}(+,-,-,-)$ and $(\sigma^\mu)_{\ua\da}=(\sigma^0,\sigma^1,\sigma^2,\sigma^3)_{\ua\da}$, where
\begin{align*}
 \sigma^0=\smqty(\pmat{0}),\qquad
\sigma^1= \smqty(\pmat{1})
,\qquad
\sigma^2= \smqty(\pmat{2})
,\qquad
\sigma^3= \smqty(\pmat{3}).
\end{align*}
Define the complex conjugate for any matrices as $(M_{\ua\ub})^\ast=:\bar M_{\da\db}$ and the Hermitian conjugate as $(M_{\ua\ub})^\dag =:\bar M_{\db\da}$. Then we obtain $\{(\sigma^\mu)_{\ua\db}\}^\dag= (\bar \sigma^\mu)_{\ub\da}=(\sigma^\mu)_{\ub\da}$ and $(\bar \sigma^\mu)^{\da\ua}:=\varepsilon^{\da\db}\varepsilon^{\ua\ub}\sigma^\mu_{\ub\db}=(\sigma^0,-\sigma^i)$, where $\varepsilon_{\ua\ub}$ is the antisymmetric tensor characterized by $\varepsilon^{12}=\varepsilon_{21}=1$(the same for dotted indices). 
}
\begin{align}
\{ Q_{\ua}, \bar{Q}_{\db} \}= \sigma^\mu_{\ua\db}P_{\mu}.
\end{align}
$Q_\alpha, \bar Q_{\da}$ are the supersymmetry generators.
The other relations are obtained in the same way
\begin{align}
[Q_\ua, M_{\mu \nu}]= {(\sigma_{\mu\nu})_\alpha}^{\ua_1} Q_{\ua_1}, \quad
\{Q_\ua,Q_\ub\}=0.
\end{align}

If we consider that there are several charges $Q_{\ua\, L}, \bar{Q}_{\da\, M}\ (L,M=1,2,\dots )$, anti-commutation relations of $Q_{\ua, L}$ is non-zero:
\begin{align}
\{ Q_{\ua\, K}, Q_{\ub\, L}\} 
&= \epsilon_{\ua \ub} \sum_l a_{l,KL}B_{l} \eqqcolon \epsilon_{\ua \ub} Z_{KL},\nn\\
\{\bar Q_{\da\, K}, \bar Q_{\db\, L}\} 
&= \varepsilon_{\da \db} \sum_l \bar a_{l,KL}B_{l} \eqqcolon \varepsilon_{\da \db} \bar Z_{KL},
\end{align}
where $a^l_{LM}$ is antisymmetric with respect to $L,M$.
From the Jocobi identity, we can get
\begin{align}\label{central}
[ Z_{KL}, Z_{MN}] = [ Z_{KL}, \bar Z_{MN}] = 0.
\end{align}
Since $Z_{KL}$ commutes with itself, $Z_{KL}$ is in the Abelian part $\mathfrak{g}^{\mathcal{A}}$ of the Lie algebra and commutative with $B_l \in \mathfrak{g}^\mathcal{S}$.
Therefore, $Z_{KL}$ commutes with all elements of the algebra. 
It is called the central charge.
The LSA $\langle P_{\mu}, M_{\mu\nu}, Z_{KL}, \bar Z_{KL}, Q_{\ua\, L}, \bar{Q}_{\da\, M}\rangle$ is called the superPoincar\'e algebra, and the LSA $\langle P_{\mu}, Z_{KL}, \bar Z_{KL}, Q_{\ua\, L}, \bar{Q}_{\da\, M}\rangle$ is called the supersymmetric algebra.

\section{Novel possible generators of the $S$-matrix}
\setcounter{equation}{0}
\subsection{Possible $\Z_2^2$-graded generators of symmetries}
In this section, we search for novel possible generators of symmetries of $S$-matrix within the framework of the $\Z_2^2$-graded LSA.
From \eqref{Z2n-VS} $\Z_2^2$-graded LSA $\g$ has four types of homogeneous vector subspaces:
\begin{align}
 \g=\g_{(0,0)}\oplus\g_{(1,1)}\oplus\g_{(0,1)}\oplus\g_{(1,0)},
\end{align}
and, from \eqref{Z2n-LB}, $\Z_2^2$-Lie brackets between each subspace are given, in terms of (anti-)commutator as in Table \ref{Z22-LB}.
\begin{table}[h]
 \begin{align*}
  \begin{array}{c|cccc}
   & \g_{(0,0)}& \g_{(1,1)}& \g_{(0,1)}& \g_{(1,0)}\\ \hline
   \g_{(0,0)}& [,]& [,]& [,]& [,]\\
   \g_{(1,1)}& [,]& [,]& \{,\}& \{,\}\\ 
   \g_{(0,1)}& [,]& \{,\}& \{,\}& [,]\\
   \g_{(1,0)}& [,]& \{,\}& [,]& \{,\}\\
  \end{array}
 \end{align*}
\caption{$\Z_2^2$-Lie brackets between each subspace}
\label{Z22-LB}
\end{table}

\noindent Thus, the generator $G_{\ua}$ in \eqref{alloperator} is categorized into two types of bosonic generators in $\g_{(0,0)}$ and $\g_{(1,1)}$, and two types of fermionic generator in $\g_{(0,1)}$ and $\g_{(1,0)}$. The $(0,0)$-graded generators are (ordinary) bosonic because the algebraic relations between them and all generators are given by the commutator. 
For the $(0,1)$ and the $(1,0)$-graded generators, the relations among generators with the same grading are given by the anti-commutators, while those between generators with different gradings are given by commutators. They are called commuting fermionic generators.
The (1,1)-graded generators are referred to as exotic bosonic ones because the relations among them are given by commutators, whereas their relations with fermionic generators are given by anti-commutators. 
Note that there are four types of symmetry generators of $S$-matrix.
This means there are four types of particles in QFT within this framework. 

We focus on the following three subalgebras:
\begin{align*}
 \h_1=\g_{00}\oplus\g_{11},\qquad  \h_2=\g_{00}\oplus\g_{01},\qquad \h_3=\g_{00}\oplus\g_{10}.
\end{align*}
As indicated in Table \ref{Z22-LB}, $\h_1$ is an ordinary Lie algebra, while $\h_2$ and $\h_3$ are nothing but LSA's. 
This shows that we can obtain generators by applying the \CM theorem to $\h_1$ and the \HLS theorem to $\h_2$ and $\h_3$. 
Therefore, in $\h_1$, there are only the translation generators $P_\mu$, Lorentz generator $M_{\mu\nu}$ and two types of generators of internal symmetries: $B_l^{00}\in \g_{(0,0)}$, $B_l^{11}\in\g_{(1,1)}$. 
$P_\mu$ and $M_{\mu\nu}$, of course, commute with $B^{00}_l$ and $B^{11}_l$.
The algebraic relations between $B$'s are given by
\begin{alignat}{2}
 [B_l^{00},B_m^{00}]&=i \sum_k f_{lmk} B_k^{00},\qquad &
 [B_l^{11},B_m^{11}]&=i \sum_k g_{lmk} B_k^{00},\nn\\
 [B_l^{00},B_m^{11}]&=i \sum_k h_{lmk} B_k^{11},
\end{alignat}
where $f_{lmk}$, $g_{lmk}$ must be antisymmetric $f_{lmk}=-f_{mlk}$, $g_{lmk}=-g_{mlk}$, but $h_{lmk}$ has no conditions because $B^{00}_l$ has the different grading from $B^{11}_l$.

On the other hand, $\h_2$ is identical to the LSA introduced in \S 2.3.
Denoting the $(0,1)$-graded supercharges by $Q_{\ua\,K}^{01}$, $\bar Q_{\da\,K}^{01}\in \g_{(0,1)}$,
the algebraic relations between $P_\mu$, $B^{00}_l$ and $Q^{01}$'s are given by
\begin{alignat}{2}
 \{Q^{01}_{\ua\,K}, \bar Q^{01}_{\da\,L}\} &=\sigma^{\mu}_{\ua\da}P_\mu\delta_{KL},\qquad&
 [Q^{01}_{\ua\,K},B_l^{00}]&=\sum_L s_{l,KL}Q^{01}_{\ua\,L}, \nn\\
\{Q^{01}_{\ua\,K}, Q^{01}_{\ub\,L}\} &= \varepsilon_{\ua\ub} \sum_l a_{l,KL} B^{00}_l,\qquad&
\{\bar Q^{01}_{\da\,K}, \bar Q^{01}_{\db\,L}\} &= \varepsilon_{\da\db} \sum_l \bar a_{l,KL} B^{00}_l, 
\end{alignat}
where $a_{l,KL}=-a_{l,LK}$. Introducing the generator $Z^{(1)}_{KL}:=\sum_l a_{l,KL} B^{00}_l\in\g_{(0,0)}$, it becomes the central charge in $\h_2$, which means
\begin{alignat}{2}
\{Q^{01}_{\ua\,K}, Q^{01}_{\ub\,L}\} &= \varepsilon_{\ua\ub} Z^{(1)}_{KL},\qquad&
 [Z^{(1)}_{KL},\,\circ\,]&=0,
\nn\\
\{\bar Q^{01}_{\da\,K}, \bar Q^{01}_{\db\,L}\} &= \varepsilon_{\da\db} \bar Z^{(1)}_{KL},\qquad&
 [\bar Z^{(1)}_{KL},\,\circ\,]&=0.
\end{alignat}
Similarly, $\h_3$ is also identical to the LSA introduced in \S 2.3.
Denoting the $(1,0)$-graded supercharges by $Q_{\ua\,K}^{10}$, $\bar Q_{\da\,K}^{10}\in \g_{(1,0)}$ in $\h_3$, the algebraic relations between $P_\mu$ and $Q^{10}$'s are given by
\begin{alignat}{2}
 \{Q^{10}_{\ua\,K}, \bar Q^{10}_{\da\,L}\} &=\sigma^{\mu}_{\ua\da}P_\mu\delta_{KL},\qquad&
 [Q^{10}_{\ua\,K},B_l^{00}]&=\sum_L t_{l,KL}Q^{10}_{\ua\,L},\nn\\
\{Q^{10}_{\ua\,K}, Q^{10}_{\ub\,L}\} &= \varepsilon_{\ua\ub} \sum_l b_{l,KL} B^{00}_l,\qquad&
\{\bar Q^{10}_{\da\,K}, \bar Q^{10}_{\db\,L}\} &= \varepsilon_{\da\db} \sum_l \bar b_{l,KL} B^{00}_l, 
\end{alignat}
where $b_{l,KL}=-b_{l,LK}$. Introducing the generator $Z^{(2)}_{KL}:=\sum_l b_{l,KL} B^{00}_l \in\g_{(0,0)}$, it also becomes the central charge in $\h_3$, which means
\begin{alignat}{2}
\{Q^{10}_{\ua\,K}, Q^{10}_{\ub\,L}\} &= \varepsilon_{\ua\ub} Z^{(2)}_{KL},\qquad&
 [Z^{(2)}_{KL},\,\circ\,]&=0,
\nn\\
\{\bar Q^{10}_{\da\,K}, \bar Q^{10}_{\db\,L}\} &= \varepsilon_{\da\db} \bar Z^{(2)}_{KL},\qquad&
 [\bar Z^{(2)}_{KL},\,\circ\,]&=0.
\end{alignat}


We need to determine the other algebraic relations. Taking into account that $Q^{01}_{\ua\,L}$ and $Q^{10}_{\ua\,L}$ are a $(\frac{1}{2},0)$-rep, $[Q^{01}_{\ua\,K},Q^{10}_{\ub\,L}]\in \g_{(1,1)}$ will be a $(1,0)$-rep or $(0,0)$-rep, and  $[Q^{01}_{\ua\,K},\bar Q^{10}_{\da\,L}]\in \g_{(1,1)}$ will be a $(\frac{1}{2},\frac{1}{2})$-rep. Moreover, because $B^{11}_l \in \g_{11}$ is $(0,0)$-rep, we can determine the following relations: 
\begin{alignat}{2}
 [Q^{01}_{\ua\,K},Q^{10}_{\ub\,L}]&=\sum_l \varepsilon_{\ua\ub} c_{l,KL} B_l^{11},\qquad&
 [\bar Q^{01}_{\da\,K},\bar Q^{10}_{\db\,L}]&=- \sum_l \varepsilon_{\da\db} \bar c_{l,KL} B_l^{11}
,\nn\\
 \{Q^{01}_{\ua\,K},B_l^{11}\}&=\sum_L u_{l,KL}Q^{10}_{\ua\,L},\qquad &
 \{Q^{10}_{\ua\,K},B_l^{11}\}&=\sum_L v_{l,KL}Q^{01}_{\ua\,L}
,\nn\\
 [Q^{01}_{\ua\,K},\bar Q^{10}_{\da\,L}]&=0.
\end{alignat}

We compute $\Z_2^2$-Jacobi identities \eqref{Z2n-JI} to consider the conditions for structure constants such that $\g$ becomes $\Z_2^2$-graded LSA. 
We obtain the following conditions
from $(A,B,C)=(Q,\bar Q, B)$:
\begin{alignat}{3}
 s_{l,LK}=&\bar s_{l,KL},\qquad&  t_{l,LK}=&\bar t_{l,KL},\qquad&  u_{l,LK}=&\bar v_{l,KL},
\end{alignat}
from $(A,B,C)=(Q,Q,\bar Q)$:
\begin{alignat}{2}
\sum_l a_{l,KL}\bar s_{l,KL}=&0
,\qquad& 
\sum_l b_{l,KL}\bar t_{l,KL}=&0
,\nn\\ 
\sum_l c_{l,KL}\bar u_{l,KL}=&0
,\qquad& 
\sum_l c_{l,KL}\bar v_{l,KL}=&0
,\label{JfQQQd}
\end{alignat}
from $(A,B,C)=(Q,Q,Q)$:
\begin{alignat}{2}
\sum_{l} \big( a_{l,KL}s_{l,NM} + a_{l,NL}s_{l,KM} \big) &= 0
,\qquad&
\sum_{l} \big( b_{l,KL}t_{l,NM} + b_{l,NL}t_{l,KM} \big) &= 0
,\nn\\
\sum_{l} \big( c_{l,KL}u_{l,NM} + c_{l,NL}u_{l,KM} \big) &= 0
,\qquad&
\sum_{l} \big( c_{l,LK}v_{l,NM} + c_{l,LN}v_{l,KM} \big) &= 0
,\nn\\
\sum_{l} \big( a_{l,KL}t_{l,NM} + c_{l,LN}u_{l,KM} \big) &= 0
,\qquad&
\sum_{l} \big( b_{l,KL}s_{l,NM} - c_{l,NL}v_{l,KM} \big) &= 0
,\label{JfQQQ}
\end{alignat}
and from $(A,B,C)=(Q,Q,B)$:
\begin{align}
 \sum_M \big( s_{l,LM}a_{m,KM}- s_{l,KM}a_{m,LM} \big) &= i\sum_k f_{klm}a_{k,KL}
,\nn\\
 \sum_M \big( u_{l,LM}c_{m,KM}- u_{l,KM}c_{m,LM} \big) &= i\sum_k h_{klm}a_{k,KL}
,\nn\\
 \sum_M \big( t_{l,LM}b_{m,KM}- t_{l,KM}b_{m,LM} \big) &= i\sum_k f_{klm}b_{k,KL}
,\nn\\
 \sum_M \big( v_{l,LM}c_{m,MK}- v_{l,KM}c_{m,ML} \big) &= -i\sum_k h_{klm}b_{k,KL}
,\nn\\
 \sum_M \big( t_{l,LM}c_{m,KM}- s_{l,KM}c_{m,ML} \big) &= -i\sum_k g_{lkm}c_{k,KL}
,\nn\\
 \sum_M \big( v_{l,LM}a_{m,KM}+ u_{l,KM}b_{m,LM} \big) &= i\sum_k g_{klm}c_{k,KL}
.\label{JfQQB}
\end{align}
From these conditions, introducing the generator $Z^{11}_{KL}:=\sum_l c_{l,ML}B^{11}_l\in \g_{(1,1)}$, it becomes the $\Z_2^2$-central charge, which means
\begin{alignat}{2}
 [Q^{01}_{\ua\,K},\, Q^{10}_{\ub\,L}]&=\varepsilon_{\ua\ub} Z^{11}_{KL},\quad &
\llbracket Z^{11}_{KL}, \,T \rrbracket &=0
,\nn\\
 [Q^{01}_{\da\,K},\, Q^{10}_{\db\,L}]&=\varepsilon_{\da\db} \bar Z^{11}_{KL},&
\llbracket \bar Z^{11}_{KL}, \,T \rrbracket &=0,
\end{alignat} 
where $T\in \{ P_\mu, Z^{(1)}_{MN}, \bar Z^{(1)}_{MN}, Z^{(2)}_{MN}, \bar Z^{(2)}_{MN}, Z^{11}_{MN}, \bar Z^{11}_{MN}, Q_{\ua\,K}^{01}, \bar Q_{\da\,K}^{01}, Q_{\ua\,K}^{10}, \bar Q_{\da\,K}^{10}\}$.
From these observations, we can obtain the novel possible generators of symmetries of the $S$-matrix:
\begin{alignat}{2}
 \g_{(0,0)}&\ni P_\mu, M_{\mu\nu}, B^{00}_l, Z^{(1)}_{KL}
, \bar Z^{(1)}_{KL}, Z^{(2)}_{KL}, \bar Z^{(2)}_{KL}
,\qquad&
 \g_{(1,1)}&\ni B^{11}_l, Z^{11}_{KL}, \bar Z^{11}_{KL}
,\nn\\
 \g_{(0,1)}&\ni Q_{\ua\,K}^{01}, \bar Q_{\da\,K}^{01}
,\qquad&
 \g_{(1,0)}&\ni Q_{\ua\,K}^{10}, \bar Q_{\da\,K}^{10}
.
\end{alignat}
Therefore the following theorem holds.
\begin{thm}
There exits the unique $\Z_2^2$-graded extension of supersymmetric algebra which is given by the $\Z_2^2$-graded LSA $\langle P_\mu, Z^{(1)}_{KL}, \bar Z^{(1)}_{KL}, Z^{(2)}_{KL}, \bar Z^{(2)}_{KL}, Z^{11}_{KL}, \bar Z^{11}_{KL}, Q_{\ua\,K}^{01}, \bar Q_{\da\,K}^{01}, Q_{\ua\,K}^{10}, \bar Q_{\da\,K}^{10}\rangle$. We call this $\mathcal{N}=(K|L)$ $\Z_2^2$-supersymmetric algebra, where $K$ and $L$ are the number of the $(0,1)$ and $(1,0)$-graded supercharges.
\end{thm}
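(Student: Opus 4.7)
The plan is to construct $\g$ bracket-by-bracket by exploiting the three subalgebras $\h_1 = \g_{(0,0)} \oplus \g_{(1,1)}$, $\h_2 = \g_{(0,0)} \oplus \g_{(0,1)}$, and $\h_3 = \g_{(0,0)} \oplus \g_{(1,0)}$ already isolated in the excerpt, and then to close up the remaining mixed brackets using Lorentz covariance together with the $\Z_2^2$-Jacobi identity \eqref{Z2n-JI}. First I would apply the \CM theorem to $\h_1$, which is an ordinary Lie algebra: this fixes $\g_{(0,0)}$ to contain only $P_\mu, M_{\mu\nu}$ and the internal generators $B^{00}_l$, while restricting $\g_{(1,1)}$ to the scalar internal generators $B^{11}_l$ (the semi-simple/Abelian split together with the non-compactness of the Lorentz group rules out any exotic momentum- or Lorentz-like operator in the $(1,1)$-sector). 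Applying the \HLS theorem separately to the ordinary Lie superalgebras $\h_2$ and $\h_3$ then forces $\g_{(0,1)}$ and $\g_{(1,0)}$ to consist of rank-one Weyl spinor supercharges $Q^{01}, \bar Q^{01}$ and $Q^{10}, \bar Q^{10}$, each furnishing its own family of central charges $Z^{(1)}_{KL}$, $Z^{(2)}_{KL}$ inside $\g_{(0,0)}$.

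The only brackets left are those mixing $\g_{(0,1)}$ with $\g_{(1,0)}$, which must land in $\g_{(1,1)}$. Lorentz covariance restricts $[Q^{01}_{\ua K}, Q^{10}_{\ub L}]$ to a $(1,0)\oplus(0,0)$-representation and $[Q^{01}_{\ua K}, \bar Q^{10}_{\da L}]$ to a $(1/2,1/2)$-representation; since the first step already pinned $\g_{(1,1)}$ down to scalar generators, the $(1,0)$ and $(1/2,1/2)$ channels must vanish, giving $[Q^{01}_{\ua K}, \bar Q^{10}_{\da L}] = 0$ and $[Q^{01}_{\ua K}, Q^{10}_{\ub L}] = \varepsilon_{\ua\ub}\sum_l c_{l,KL} B^{11}_l$. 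I would then systematically impose the $\Z_2^2$-Jacobi identity on the triples $(Q,\bar Q, B)$, $(Q,Q,\bar Q)$, $(Q,Q,Q)$, and $(Q,Q,B)$ to derive the constraints \eqref{JfQQQd}, \eqref{JfQQQ}, \eqref{JfQQB}, and to read off that the antisymmetric combinations $Z^{(1)}_{KL}, Z^{(2)}_{KL}, Z^{11}_{KL}$ (and their conjugates) have vanishing $\llbracket\cdot,\cdot\rrbracket$ with every generator. Uniqueness then follows because every candidate generator has been forced into one of the four graded components by the two classical no-go theorems applied to $\h_1, \h_2, \h_3$, and the only new bracket needed to close $\g$ is the $\g_{(0,1)} \times \g_{(1,0)}$ one, which is in turn uniquely fixed by Lorentz covariance and the already-determined structure of $\g_{(1,1)}$.

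The main obstacle I expect is controlling the commutator/anti-commutator sign subtleties of the $\Z_2^2$-graded Jacobi identity when the three arguments lie in distinct graded components, especially for the mixed triples $(Q^{01}, Q^{10}, B^{00})$ and $(Q^{01}, Q^{10}, B^{11})$, since the outer and inner brackets in \eqref{Z2n-JI} flip between commutator and anti-commutator according to Table \ref{Z22-LB}. Keeping those signs straight is exactly what guarantees that $Z^{11}_{KL}\in\g_{(1,1)}$ \emph{anti}-commutes trivially with the two spinor families rather than producing a new bosonic generator, i.e.\ that it is $\Z_2^2$-central rather than merely Lie-central. A secondary but logically important point is that the \CM step on $\h_1$ must be carried out \emph{before} the mixed-channel analysis, so that the $(1/2,1/2)$-rep is genuinely already forbidden from appearing in $\g_{(1,1)}$ when one rules out $[Q^{01}, \bar Q^{10}]\neq 0$; otherwise one would risk reintroducing an ``exotic translation'' through the back door.
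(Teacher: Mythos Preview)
Your proposal is correct and follows essentially the same route as the paper: exploit the three subalgebras $\h_1,\h_2,\h_3$ to import the \CM and \HLS constraints, then fix the remaining $\g_{(0,1)}\times\g_{(1,0)}\to\g_{(1,1)}$ brackets by Lorentz covariance (using that $\g_{(1,1)}$ contains only Lorentz scalars), and finally run the $\Z_2^2$-Jacobi identities on the triples $(Q,\bar Q,B)$, $(Q,Q,\bar Q)$, $(Q,Q,Q)$, $(Q,Q,B)$ to show $Z^{11}_{KL}$ is $\Z_2^2$-central. Your emphasis on doing the \CM analysis of $\h_1$ \emph{before} the mixed-channel step is exactly the logical point the paper relies on when it asserts $[Q^{01}_{\ua K},\bar Q^{10}_{\da L}]=0$.
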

Specifically, if we set the number of each supercharges $Q_{\ua\,K}^{01}, Q_{\ua\,L}^{10}$ to $K=L=1$, it corresponds to the $\mathcal{N}=2\ \Z_2^2$-supersymmetric algebra, which was introduced by Bruce\cite{Bruce}.  
In the present case of $\Z_2^2$-grading, the internal symmetries are generated by ordinary Lie algebra $B^{00}_l$ and $B^{11}_l$. This does not seem to be a big difference from the ordinary supersymmetry. However we will see that internal symmetries are generated by a graded Lie algebra when $\Z_2^n$-graded ($n\geq 3$) extensions of supersymmetry are considered.
\subsection{Possible $\Z_2^n$-graded generators of symmetries}
The discussion for the $\Z_2^2$-graded LSA in the previous section can also be applied to the $\Z_2^n$-graded LSA. We define two subsets of $\Z_2^n$:
\begin{align}
 (\Z_2^n)_{\ev}&:=\{\vec{b}\in \Z_2^n\  |\ \vec{b}\cdot \vec{b}=0 \mod 2\}\nn\\
 (\Z_2^n)_{\od}&:=\{\vec{f}\in \Z_2^n\  |\ \vec{f}\cdot \vec{f}=1 \mod 2\}
,
\end{align}
and the subalgebras of \eqref{Z2n-VS}:
\begin{align}
 \h_{\vec{b}}=\g_{\vec{0}}\oplus \g_{\vec{b}}\quad (\vec{b}\ne \vec{0})
,\qquad
 \h_{\vec{f}}=\g_{\vec{0}}\oplus \g_{\vec{f}}
,
\end{align}
where $\vec{0}$ is the identity element of $\Z_2^n$. 
Then $\h_{\vec{b}}$ is a Lie algebra, while $\h_{\vec{f}}$ is a LSA. This shows that we can obtain generators by applying the \CM theorem to $\h_{\vec{b}}$ and the \HLS theorem to $\h_{\vec{f}}$.  It can be seen that the following generators of the $S$-matrix are allowed:
\begin{alignat}{3}
 \g_{\vec{0}}&\ni P_\mu, M_{\mu\nu}, B^{\vec{0}}_l
,\qquad&
 \g_{\vec{b}}&\ni B^{\vec{b}}_l
,\qquad&
 \g_{\vec{f}}&\ni Q_{\ua\,K}^{\vec{f}}, \bar Q_{\da\,K}^{\vec{f}}
,
\end{alignat}
and the algebraic relations become
\begin{alignat}{2}
\llbracket Q_{\ua\,K}^{\vec{f}}, \bar Q_{\da\,L}^{\vec{f}\,'} \rrbracket &=\sigma_{\ua\da}^{\mu} P_\mu\delta_{KL}\delta^{\vec{f}\vec{f}\,'}
, \qquad&
\llbracket Q_{\ua\,K}^{\vec{f}}, Q_{\ub\,L}^{\vec{f}\,'} \rrbracket
&=\varepsilon_{\ua\ub} \sum_l a^{(\vec{f},\vec{f}\,')}_{l,KL} B^{\vec{f}+\vec{f}\,'}_l
,\nn\\
\llbracket Q^{\vec{f}}_{\ua\,L},B_l^{\vec{b}} \rrbracket 
&=s^{(\vec{f},\vec{b})}_{l,LM}Q^{\vec{b}+\vec{f}}_{\ua\,M}
, &
\llbracket B_l^{\vec{b}},B_m^{\vec{b}\,'} \rrbracket 
&=i\sum_k h^{(\vec{b},\vec{b}\,')}_{lmk} B_k^{\vec{b}+\vec{b}\,'}
,\nn\\
\llbracket \bar Q_{\da\,K}^{\vec{f}}, \bar Q_{\db\,L}^{\vec{f}\,'} \rrbracket
&=-(-1)^{(\vec{f},\vec{f}\,')}\varepsilon_{\da\db} \sum_l &\bar a^{(\vec{f},\vec{f}\,')}_{l,KL} B^{\vec{f}+\vec{f}\,'}_l
,
\label{Z2n-SUSY}
\end{alignat}
where $a,\,s$, and $h$ are structure constants with the $\vec{0}$-grading, and $a^{(\vec{f},\vec{f}\,')}_{l,KL}=(-1)^{\vec{f} \cdot \vec{f}\,' }a^{(\vec{f}\,',\vec{f})}_{l,LK}$ and $h^{(\vec{b},\vec{b}\,')}_{lmk}=-(-1)^{\vec{b}\cdot\vec{b}\,'}h^{(\vec{b}\,',\vec{b})}_{mlk}$.

In the same way as $\Z_2^2$-extension, we obtain  the conditions for structure constants from the $\Z_2^n$-Jacobi identities \eqref{Z2n-JI}. They are obtained from $(A,B,C)=(Q,\bar Q, B)$:
\begin{align}
 s_{l,LM}^{(\vec{f},\vec{f}+\vec{f}\,')}
&= \bar s_{l,ML}^{(\vec{f}\,',\vec{f}+\vec{f}\,')}
,
\end{align}
from $(A,B,C)=(Q, Q,\bar Q)$:
\begin{align}
 \sum_{l} a^{(\vec{f},\vec{f}\,')}_{l,LM} \bar s^{(\vec{f}\,'',\vec{f}+\vec{f}\,')}_{l,NK} &= 0 
,
\end{align}
from $(A,B,C)=(Q, Q, Q)$:
\begin{align}
 \sum_{l} a^{(\vec{f}\,',\vec{f}\,'')}_{l,LM} s^{(\vec{f},\vec{f}\,'+\vec{f}\,'')}_{l,KN} -(-1)^{\vec{f}\cdot\vec{f}\,'} \sum_l a^{(\vec{f},\vec{f}\,'')}_{l,KM} s^{(\vec{f}\,',\vec{f}+\vec{f}\,'')}_{l,LN} 
&= 0 
,
\end{align}
and from $(A,B,C)=(Q, Q, B)$:
\begin{align}
 s^{(\vec{f}\,',\vec{b})}_{l,LM}a^{(\vec{f},\vec{f}\,'+\vec{b})}_{m,KM}+(-1)^{\vec{f}\cdot\vec{f}\,'}s^{(\vec{f},\vec{b})}_{l,KM}a^{(\vec{f}\,',\vec{f}+\vec{b})}_{m,LM}&= \sum_{k} ih^{(\vec{f}+\vec{f}\,',\vec{b})}_{klm}a^{(\vec{f},\vec{f}\,')}_{k,KL}
.
\end{align}
Using these conditions and setting $Z_{KL}^{\vec{f}+\vec{f}\,'}:=\sum_l a^{(\vec{f},\vec{f}\,')}_{l,KL} B^{\vec{f}+\vec{f}\,'}_l$, then it satisfies
\begin{align}
 \llbracket Z_{KL}^{\vec{f}+\vec{f}\,'} ,\, T \rrbracket =
 \llbracket \bar Z_{KL}^{\vec{f}+\vec{f}\,'} ,\, T \rrbracket =0
,\quad
T\in\{P_\mu, Z_{KL}^{\vec{f}+\vec{f}\,'}, \bar Z_{KL}^{\vec{f}+\vec{f}\,'}, Q_{\ua\, K}^{\vec{f}}, \bar Q_{\ua\, K}^{\vec{f}}\},
\end{align}
where $Z_{KL}^{\vec{f}+\vec{f}\,'}=(-1)^{\vec{f}\cdot\vec{f}\,'}Z_{LK}^{\vec{f}+\vec{f}\,'}$.

Therefore the following theorem holds.
\begin{thm}
There exits the unique $\Z_2^n$-graded extension of supersymmetric algebra which is given by the $\Z_2^n$-graded LSA $\langle P_\mu, Z_{KL}^{\vec{f}+\vec{f}\,'}, \bar Z_{KL}^{\vec{f}+\vec{f}\,'}, Q_{\ua\, K}^{\vec{f}}, \bar Q_{\ua\, K}^{\vec{f}} \rangle$. 
\end{thm}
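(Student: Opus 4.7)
\emph{Plan.} The proof strategy extends the $\Z_2^2$ analysis of \S 3.1 grading by grading. First I would invoke the decomposition into subalgebras $\h_{\vec{b}} = \g_{\vec{0}} \oplus \g_{\vec{b}}$ for $\vec{b}\in (\Z_2^n)_{\ev}\setminus\{\vec 0\}$ and $\h_{\vec{f}} = \g_{\vec{0}} \oplus \g_{\vec{f}}$ for $\vec{f} \in (\Z_2^n)_{\od}$ introduced above the theorem. Applying the \CM theorem to each $\h_{\vec{b}}$ (which is a genuine Lie algebra) forces the content of $\g_{\vec{0}}$ to be exhausted by $P_\mu,\,M_{\mu\nu}$ and internal generators $B^{\vec{0}}_l$, and forces $\g_{\vec{b}}$ to contain only internal generators $B^{\vec{b}}_l$, because no additional Poincar\'e-type generators may appear outside $\g_{\vec{0}}$ when \CM is applied to $\h_{\vec{b}}$. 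Then applying the \HLS theorem to each $\h_{\vec{f}}$ (which is a genuine Lie superalgebra by Table~\ref{Z22-LB} and its $\Z_2^n$-analogue) fixes the spinorial sector of $\g_{\vec{f}}$ to rank-one Weyl charges $Q^{\vec{f}}_{\ua\,K},\bar Q^{\vec{f}}_{\da\,K}$ satisfying $\llbracket Q^{\vec{f}}_{\ua\,K},\bar Q^{\vec{f}}_{\da\,L}\rrbracket = \sigma^\mu_{\ua\da} P_\mu \delta_{KL}$.

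Next I would determine the cross-grading brackets. For $\vec{f}\ne\vec{f}\,'$, the bracket $\llbracket Q^{\vec{f}}_{\ua\,K}, Q^{\vec{f}\,'}_{\ub\,L}\rrbracket \in \g_{\vec{f}+\vec{f}\,'}$ is forced by Lorentz representation theory to be a $(0,0)$-rep, hence a linear combination $\varepsilon_{\ua\ub}\sum_l a^{(\vec{f},\vec{f}\,')}_{l,KL} B^{\vec{f}+\vec{f}\,'}_l$; symmetrically for the conjugate bracket. The mixed bracket $\llbracket Q^{\vec{f}}_{\ua\,K},\bar Q^{\vec{f}\,'}_{\da\,L}\rrbracket$ would be a $(\tfrac12,\tfrac12)$-rep lying in $\g_{\vec{f}+\vec{f}\,'}$, but by the first step the graded sector $\g_{\vec{f}+\vec{f}\,'}$ (with $\vec{f}+\vec{f}\,' \ne \vec 0$) contains no vector-type generator, so this bracket must vanish for $\vec{f}\ne \vec{f}\,'$. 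This reproduces exactly the structure displayed in \eqref{Z2n-SUSY}.

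Third, I would impose the $\Z_2^n$-graded Jacobi identities \eqref{Z2n-JI} on the quadruples $(Q,\bar Q,B)$, $(Q,Q,\bar Q)$, $(Q,Q,Q)$ and $(Q,Q,B)$, producing the four constraint families listed just before the theorem. Setting $Z^{\vec{f}+\vec{f}\,'}_{KL} := \sum_l a^{(\vec{f},\vec{f}\,')}_{l,KL} B^{\vec{f}+\vec{f}\,'}_l$, these constraints collapse into the statement $\llbracket Z^{\vec{f}+\vec{f}\,'}_{KL}, T\rrbracket = 0$ for every generator $T$ of the algebra, which is the $\Z_2^n$-analogue of the central-charge argument of \HLS. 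Uniqueness is then automatic: every non-vanishing bracket has been forced by the combination of \CM, \HLS, and Lorentz covariance, and the remaining freedom is just a change of basis among the internal $B^{\vec{b}}_l$.

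The main obstacle I expect is the systematic bookkeeping of the graded sign factors $(-1)^{\vec{f}\cdot\vec{f}\,'}$, since the Jacobi identity on triples $(Q^{\vec{f}}, Q^{\vec{f}\,'}, Q^{\vec{f}\,''})$ with three distinct odd gradings produces a web of constraints not present in the $\Z_2^2$ case; one must track the parities of each pairing $\vec{f}\cdot\vec{f}\,'$, $\vec{f}\cdot\vec{f}\,''$, $\vec{f}\,'\cdot\vec{f}\,''$ to see that the symmetrization conditions on $a^{(\vec{f},\vec{f}\,')}_{l,KL}$ are mutually compatible and yield the stated central-charge structure. Once this sign bookkeeping is handled uniformly via the identity $a^{(\vec{f},\vec{f}\,')}_{l,KL}=(-1)^{\vec{f}\cdot\vec{f}\,'} a^{(\vec{f}\,',\vec{f})}_{l,LK}$ noted below \eqref{Z2n-SUSY}, the rest of the argument reduces to the same manipulations carried out explicitly for $\Z_2^2$ in Theorem 1.
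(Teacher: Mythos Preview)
Your plan is correct and follows essentially the same route as the paper: reduce to the subalgebras $\h_{\vec{b}}$ and $\h_{\vec{f}}$, apply the \CM\ and \HLS\ theorems there to fix the generator content, determine the cross-grading brackets by Lorentz representation theory, and then run the $\Z_2^n$-Jacobi identities on the four families $(Q,\bar Q,B)$, $(Q,Q,\bar Q)$, $(Q,Q,Q)$, $(Q,Q,B)$ to force the $Z^{\vec{f}+\vec{f}\,'}_{KL}$ to be graded-central. The only refinement worth making is that the paper states centrality of $Z^{\vec{f}+\vec{f}\,'}_{KL}$ only with respect to $T\in\{P_\mu, Z, \bar Z, Q, \bar Q\}$ (the supersymmetric algebra itself), not with respect to arbitrary internal $B^{\vec{b}}_l$, so your phrase ``for every generator $T$ of the algebra'' should be read accordingly.
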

This algebra contains $K_i$ supercharges $Q_{\ua\, K_i}^{\vec{f}}$ in each subspace $\g_{\vec{f}}$ and nontrivial centers in each subspace $\mathfrak{g}_{\vec{b}}$. We call this $\mathcal{N}=(K_1| K_2| ... |K_{2^{n-1}})$ $\Z_2^n$-supersymmetric algebra.
Similarly to the $\Z_2^2$-case, if we set the number of each $Q_{\ua, K}^{\vec{f}}$ to $K_i=1$, it corresponds to the $\mathcal{N}=2^{n-1}\ \Z_2^n$-supersymmetric algebra, which was introduced by Bruce\cite{Bruce}. 
\subsection{$\Z_2^n$-Graded internal symmetries}
In this section, we focus on the generators of internal symmetry with $\Z_2^n$-grading $B_l^{\vec{b}}$. The discussion in the previous section says there can exist $B_l^{\vec{b}}$ in each subspace $\g_{\vec{b}}$ and they satisfy 
\begin{align}
 \llbracket B_l^{\vec{b}},B_m^{\vec{b}\,'} \rrbracket 
&=i\sum_k h^{(\vec{b},\vec{b}\,')}_{lmk} B_k^{\vec{b}+\vec{b}\,'}.
\tag{\ref{Z2n-SUSY}}
\end{align}
This means the set of $B_{l}^{\vec{b}}$ is a $(\Z_2^n)_{\ev}$-graded LSA. Especially, if we set $n=2k+1$, we can prove that this algebra is isomorphic to the $\Z_2^{2k}$-graded Lie algebra, which is another type of $\Z_2^n$-graded extension of Lie algebra introduced by Rittenberg and Wyler\cite{RW1,RW2}. 

\begin{dfn}\label{defLA}
Let $\g_{LA}$ be a $\Z_2^{2n}$-graded vector space \eqref{Z2n-VS} and $\vec{\ua}=(\vec{\ua}_1,\vec{\ua_2},...,\vec{\ua}_n)\in \Z_2^{2n}$ where $\vec{\ua}_i=(\ua_{2i-1},\ua_{2i})\in \Z_2^2$. Then $\g_{LA}$ is called $\Z_2^{2n}$-graded Lie algebra if $\g_{LA}$ has the $\Z_2^{2n}$-Lie bracket defined by
\begin{align}
 [A,B\}:=AB-(-1)^{\sum_{i=1}^n |\vec{\ua}_i\times \vec{\ub}_i|}BA,\quad A\in \g_{\vec{\ua}},\  B\in \g_{\vec{\ub}}
,\label{Z2n-LieB}
\end{align}
and $[A,B\}$ satisfies 
\begin{align}
 [A,B\}&\in\g_{\vec{\ua}+\vec{\ub}},\nn\\
 [A,[B,C\}\}&=[[A,B\},C\}+(-1)^{\sum_{i=1}^n |\vec{\ua}_i\times \vec{\ub}_i|}[B[A,C\}\}
\end{align}
\end{dfn}

\begin{thm}
 The $(\Z_2^{2n+1})_{\ev}$-graded LSA 
\begin{align}
\g_{\ev}:=\bigoplus_{\vec{b}\in(\Z_2^{2n+1})_{\ev}}\g_{\vec{b}}
\end{align}
 is isomorphic to the $\Z_2^{2n}$-graded Lie algebra $\g_{LA}$.
\end{thm}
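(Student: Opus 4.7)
The plan is to reduce the isomorphism to a statement about bilinear forms over $\Z_2$. Both brackets have the form $AB-(-1)^{\sigma(\vec{a},\vec{b})}BA$, where $\sigma$ is a $\Z_2$-valued symmetric bilinear form on the grading group: for $\g_{\ev}$ it is $\sigma_1(\vec{b},\vec{b}'):=\vec{b}\cdot\vec{b}'$, the restriction of the standard inner product on $\Z_2^{2n+1}$; for $\g_{LA}$ it is $\sigma_2(\vec{\ua},\vec{\ub}):=\sum_{i=1}^n(\ua_{2i-1}\ub_{2i}+\ua_{2i}\ub_{2i-1})\bmod 2$, using $|x|\equiv x \pmod 2$ for any integer $x$. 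Hence it suffices to exhibit a group isomorphism $\phi:(\Z_2^{2n+1})_{\ev}\to \Z_2^{2n}$ satisfying $\sigma_1(\vec{b},\vec{b}')=\sigma_2(\phi(\vec{b}),\phi(\vec{b}'))$; the algebra isomorphism is then defined componentwise by identifying $\g_{\vec{b}}$ with $(\g_{LA})_{\phi(\vec{b})}$ via any linear bijection.

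Next I would verify that both $\sigma_1$ and $\sigma_2$ are nondegenerate alternating forms on $2n$-dimensional $\Z_2$-spaces. For $\sigma_2$ this is immediate (it is the standard symplectic form). For $\sigma_1$, alternation follows from $\vec{b}\cdot\vec{b}=\sum_i b_i^2\equiv\sum_i b_i\equiv 0\pmod 2$ by the defining condition of the even subset. Nondegeneracy follows because the radical is $(\Z_2^{2n+1})_{\ev}\cap((\Z_2^{2n+1})_{\ev})^{\perp}$, the perpendicular in the ambient $\Z_2^{2n+1}$ being spanned by the all-ones vector $\vec{1}$; since $\vec{1}\cdot\vec{1}=2n+1\equiv 1\pmod 2$, we have $\vec{1}\notin(\Z_2^{2n+1})_{\ev}$, so the radical is trivial.

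By the classification of alternating bilinear forms over $\Z_2$, any nondegenerate alternating form on a $2n$-dimensional $\Z_2$-space admits a symplectic basis, so $\phi$ exists. Concretely, starting from the natural basis $v_i:=\vec{e}_i+\vec{e}_{2n+1}$ ($i=1,\dots,2n$) of $(\Z_2^{2n+1})_{\ev}$, whose $\sigma_1$-Gram matrix has $0$ on the diagonal and $1$ off-diagonal, I would take $(v_1,v_1+v_2)$ as the first hyperbolic pair, orthogonalize the remaining vectors via $v_i\mapsto v_1+v_2+v_i$ for $i\ge 3$ (a direct check shows the new Gram matrix again has the same structure: $0$ on the diagonal and $1$ off-diagonal), and iterate. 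Mapping this symplectic basis onto the standard one of $(\Z_2^{2n},\sigma_2)$ defines $\phi$.

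The main technical point is the existence of the symplectic basis; although classical, one must be careful in characteristic $2$, where alternating and symmetric are inequivalent notions. The alternation of $\sigma_1$ on the even subset (but not on all of $\Z_2^{2n+1}$) is precisely what makes the theorem work, and the odd parity $2n+1$ is essential: it guarantees $\vec{1}\cdot\vec{1}\equiv 1$, so that the perpendicular of the even hyperplane is not contained in it, which is what gives the nondegeneracy required for the classification to apply.
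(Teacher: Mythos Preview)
Your proof is correct and shares the paper's core reduction: both observe that the bracket is entirely determined by a $\Z_2$-valued bilinear form on the grading group, so the task is to match those forms. The execution differs. The paper writes down an explicit map $f:\Z_2^{2n}\to(\Z_2^{2n+1})_{\ev}$, sending $\vec{\alpha}=(\vec{\alpha}_1,\dots,\vec{\alpha}_n)$ to a $(2n{+}1)$-vector built from the blocks $\vec{\alpha}_i$ with running parity shifts $\sum_{k<i}|\vec{\alpha}_k|\,(1,1)$ and a final parity bit $\sum_k|\vec{\alpha}_k|$, and then asserts that a direct computation yields $\sum_i|\vec{\alpha}_i\times\vec{\beta}_i|\equiv f(\vec{\alpha})\cdot f(\vec{\beta})\pmod 2$. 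You instead recognise $\sigma_1$ and $\sigma_2$ as nondegenerate alternating forms on $2n$-dimensional $\Z_2$-spaces and invoke the uniqueness of such a form (existence of a symplectic basis), supplementing this with your own iterative construction from the basis $v_i=e_i+e_{2n+1}$. Your route is more conceptual and isolates exactly why the ambient dimension must be odd: it forces $\vec{1}\cdot\vec{1}\equiv 1$, so $\vec{1}\notin(\Z_2^{2n+1})_{\ev}$ and the restricted form is nondegenerate. The paper's route is a single explicit formula plus a claimed straightforward check---more self-contained, but it leaves the structural reason hidden.
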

\begin{proof}
Let $f$ be an injective map $f: \vec{\ua}\in \Z_2^{2n}\to\vec{a}\in(\Z_2^{2n+1})_{\ev}$, defined by
\begin{align*}
\vec{a}=\left( \vec{\ua}_1, \vec{\ua}_2+|\vec{\ua}_1|(1,1),\cdots \vec{\ua}_i+\sum_{k=1}^{i-1}|\vec{\ua}_k|(1,1),\cdots, \vec{\ua}_{n}+\sum_{k=1}^{n-1}|\vec{\ua}_k|(1,1), \sum_{k=1}^{n}|\vec{\alpha}_k| \right)
\end{align*}
where  $|\vec{a}_k|^2:=a_{2k-1}^2+a_{2k}^2$, and it is equal to $|\vec{a}_k|=a_{2k-1}+a_{2k} \mod 2$.

Since $\Z_2^{2n}$-graded Lie algebra $\g_{LA}$ and $(\Z_2^{2n+1})_{\ev}$-graded LSA $\g_{\ev}$ have the same number of the graded subspaces, the only difference of them is the Lie bracket, which comes from $\sum_{i=1}^n |\vec{\ua}_i\times \vec{\ub}_i|$ in \eqref{Z2n-LieB} and $\vec{a}\cdot \vec{b}$ in \eqref{Z2n-LB}. However, taking into account that computations are performed in modulo 2, we have straightforwardly 
\begin{align}
 \sum_{i=1}^n |\vec{\ua}_i\times \vec{\ub}_i|=\vec{a}\cdot \vec{b}.
\end{align}
Therefore the map $f$ is an isomorphism between $\g_{LA}$ and $\g_{\ev}$.
\end{proof}
From this theorem, we can redefine $\Z_2^{2n}$-graded Lie algebra (Definition \ref{defLA}).
\begin{dfn}
 Let $\g$ be a $\Z_2^{n}$-graded LSA in Definition \ref{defsalg}. We define $\Z_2^{n-1}$-graded Lie algebra as the subalgebra 
\begin{align}
 \g_{\ev}:=\bigoplus_{\vec{b}\in(\Z_2^n)_{\ev}}\g_{\vec{b}}\subset \g.
\end{align}
\end{dfn}
This is a more general definition, which means we can define not only $\Z_2^{2n}$-graded Lie algebras, but also $\Z_2^{2n+1}$-graded Lie algebras. Finally, for the internal symmetry, the following theorem holds.
\begin{thm}
 The $\Z_2^n$-graded Lie algebra is allowed as the generators of the internal symmetry of S-matrix.
\end{thm}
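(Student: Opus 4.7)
The plan is to assemble this as a corollary of the preceding structural results, rather than to prove anything substantial from scratch. Since the statement asserts that $\Z_2^n$-graded Lie algebras may serve as internal symmetries of the $S$-matrix, the natural strategy is to realize such an algebra inside a larger $\Z_2^{n+1}$-graded supersymmetric extension whose legitimacy as a symmetry algebra has already been established in Theorem 3.2.

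First, I would invoke Theorem 3.2 with $n+1$ in place of $n$: the $\Z_2^{n+1}$-graded supersymmetric algebra generated by $P_\mu$, the supercharges $Q_{\ua\,K}^{\vec f},\bar Q_{\da\,K}^{\vec f}$, the centers $Z_{KL}^{\vec f+\vec f\,'},\bar Z_{KL}^{\vec f+\vec f\,'}$, and the even-graded internal generators $B_l^{\vec b}$ with $\vec b\in (\Z_2^{n+1})_{\ev}$, is an allowed symmetry of the $S$-matrix. In particular, the $B_l^{\vec b}$ are bona fide $S$-matrix symmetry generators.

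Next, I would isolate the subset $\{B_l^{\vec b}\ |\ \vec b\in (\Z_2^{n+1})_{\ev}\}$ and verify it forms a subalgebra. The closure under the $\Z_2^{n+1}$-Lie bracket is immediate from \eqref{Z2n-SUSY},
\begin{align*}
 \llbracket B_l^{\vec b},B_m^{\vec b\,'} \rrbracket = i\sum_k h^{(\vec b,\vec b\,')}_{lmk} B_k^{\vec b+\vec b\,'},
\end{align*}
together with the elementary observation that $(\Z_2^{n+1})_{\ev}$ is an additive subgroup of $\Z_2^{n+1}$: if $\vec b\cdot\vec b\equiv 0$ and $\vec b\,'\cdot\vec b\,'\equiv 0\pmod 2$, then $(\vec b+\vec b\,')\cdot(\vec b+\vec b\,')\equiv \vec b\cdot \vec b+\vec b\,'\cdot \vec b\,'\equiv 0\pmod 2$. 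The Jacobi identity \eqref{Z2n-JI} is inherited automatically, so the $B_l^{\vec b}$ constitute a genuine $(\Z_2^{n+1})_{\ev}$-graded LSA sitting inside $\g$.

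Finally, I would invoke the general definition given just after Theorem 3.3, which declares precisely such an even subalgebra $\g_{\ev}:=\bigoplus_{\vec b\in (\Z_2^{n+1})_{\ev}}\g_{\vec b}$ of a $\Z_2^{n+1}$-graded LSA to be a $\Z_2^n$-graded Lie algebra. Because the $B_l^{\vec b}$ are already permitted as $S$-matrix symmetry generators by the previous step, the conclusion follows. The only point requiring care, rather than any hard calculation, is that this uniform packaging covers both parities of $n$ at once; for $n=2k$ one could alternatively identify $\g_{\ev}$ with Rittenberg--Wyler's original $\Z_2^{2k}$-graded Lie algebra through the isomorphism of Theorem 3.3, but invoking the redefinition makes the odd-$n$ case equally straightforward and avoids splitting the argument.
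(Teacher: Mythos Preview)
Your proposal is correct and mirrors exactly how the paper arrives at the theorem: the statement is presented there without a separate proof, as an immediate corollary of the discussion in \S3.2 (establishing that the $B_l^{\vec b}$ with $\vec b\in(\Z_2^{n+1})_{\ev}$ are allowed $S$-matrix generators closing under \eqref{Z2n-SUSY}) together with Theorem~3 and the ensuing Definition identifying $\g_{\ev}$ with a $\Z_2^n$-graded Lie algebra. The only cosmetic point is that the allowance of the $B_l^{\vec b}$ is not part of the statement of Theorem~2 itself but of the surrounding analysis in \S3.2, so you should cite that discussion rather than the theorem.
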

This is very interesting result because we can consider new types of gauge symmetries in relativistic QFT.
Specifically, $\Z_2^2$-extension of $\mathfrak{sl}_n$ has already been introduced in \cite{RW2} and some representation of $\Z_2^2$-$\mathfrak{sl}_2$ was investigated in \cite{AIKTT}. These will be helpful to construct $\Z_2^2$-$\mathfrak{sl}_n$ gauge field theory.

\section{Concluding remarks}
In this paper, we explored novel possible generators of symmetries of $S$-matrix in order to apply $\Z_2^n$-graded Lie superalgebra to relativistic quantum field theory. As a result, we found $\mathcal{N}=(K_1|K_2|\cdots |K_{2^{n-1}})$ $\Z_2^n$-graded supersymmetry is the only allowed extension of supersymmetric algebra as the $\Z_2^n$-graded extension. Specifically, if we set the number of each $Q_{\ua\, K}^{\vec{f}}$'s to $K_i=1$, they correspond to the $\mathcal{N}=2^{n-1}\ \Z_2^n$-supersymmetric algebra, which was introduced by Bruce\cite{Bruce}. However we cannot determine the number of supercharges purely from algebraic restrictions. This should be determined by the relationship between the degrees of freedom of particles within a model under consideration. 

Nontrivial results of the present paper are summarized as follows.
\begin{enumerate}
 \item It is possible to construct $\Z_2^n$-graded supersymmetric relativistic QFT.
 \item Gauge symmetry is not necessary generated by a Lie algebra, but by a $\Z_2^n$-graded Lie algebra.   
\end{enumerate} 
Thus an important future work is to construct a $\Z_2^2$-$\mathfrak{sl}_2$ gauge field theory.

Finally, in this paper, we have only considered the case of massive particles.
According to \cite{CM,HLS}, conformal generators can exist in the case of massless particles.
Therefore if we explore novel generators in the massless case, we could obtain more nontrivial $\Z_2^n$-graded conformal generators.  

\section*{Acknowledgment}
This work was supported by JST SPRING, Grant Number JPMJSP2139 (RI and AN). RI and AN would like to express our sincere gratitude to Professor Naruhiko Aizawa and Professor Nobuhito Maru, at Osaka Metropolitan University, for their invaluable guidance and support throughout our research.


\end{document}